\newtheorem{theorem}{Theorem}
\definecolor{myc1}{rgb}{0,0,0}
\begin{document}
	
	\title{Energy Efficient Semantic Communication over Wireless Networks with Rate Splitting}
	\author{
		\IEEEauthorblockN{Zhaohui Yang,
			Mingzhe Chen,
			Zhaoyang~Zhang,
			and Chongwen Huang
		}
	\thanks{This work was supported in part by National Natural Science Foundation
		of China under Grant 61725104 and U20A20158, and National Key R\&D
		Program of China under Grant 2018YFB1801104 and 2020YFB1807101. The work of Prof. Huang was supported by the China National Key R\&D Program under Grant 2021YFA1000500, National Natural Science Foundation of China under Grant 62101492, Zhejiang Provincial Natural Science Foundation of China under Grant LR22F010002, National Natural Science Fund for Excellent Young Scientists Fund Program (Overseas), Zhejiang University Education Foundation Qizhen Scholar Foundation, and Fundamental Research Funds for the Central Universities under Grant 2021FZZX001-21.  }
		\thanks{Z. Yang is  with Zhejiang Lab,  Hangzhou, Zhejiang, 311121, China.  Z. Yang, Z. Zhang, and C. Huang are with the College of Information Science and Electronic Engineering, Zhejiang University, Hangzhou, Zhejiang 310027, China, and Zhejiang Provincial Key Lab of Information Processing, Communication and Networking (IPCAN), Hangzhou, Zhejiang, 310007, China. (e-mails: yang\_zhaohui@zju.edu.cn, ning\_ming@zju.edu.cn, chongwenhuang@zju.edu.cn) }
		\thanks{M. Chen is with the Department of Electrical and Computer Engineering and Institute for Data Science and Computing, University of Miami, Coral Gables, FL, 33146 USA  (e-mail:mingzhe.chen@miami.edu)}
	}
	

		\maketitle
		\vspace{-0.25em}
		\begin{abstract}
			In this paper, the problem of wireless resource allocation and semantic information extraction for energy efficient semantic communications over wireless networks with rate splitting is investigated. In the considered model, a base station (BS) first extracts semantic information from its large-scale data, and then transmits the small-sized semantic information to each user which recovers the original data based on its local common knowledge. At the BS side, the probability graph is used to extract multi-level semantic information. In the downlink transmission, a rate splitting scheme is adopted, while the private small-sized semantic information is transmitted through private message and the common knowledge is transmitted through common message. Due to limited wireless resource, both computation energy and transmission energy are considered. This joint computation and communication problem is formulated as an optimization problem aiming to minimize the total communication and computation energy consumption of the network under computation, latency, and transmit power constraints. To solve this problem, an alternating algorithm is proposed where the closed-form solutions for semantic information extraction ratio  and computation frequency are obtained at each step. Numerical results verify the effectiveness of the proposed algorithm.
		\end{abstract}
		\begin{IEEEkeywords}
			Rate splitting multiple access,  semantic communication, energy efficient design. 
		\end{IEEEkeywords}
		\IEEEpeerreviewmaketitle
		
		\section{Introduction}
		The rapid development of emerging applications such as digital twin, edge learning, and metaverse requires wireless networks to support high transmission data rate, ultra low latency, and seamless connectivity \cite{mao2022rate,clerckx2022primer,xu2022edge,9562559}. However, due to limited wireless resources such as frequency and time, conventional orthogonal multiple access schemes  cannot support massive connectivity concern for next-generation wireless communication networks \cite{saad2019vision}.  
		Through using the same time or frequency resource, multiple users can be served in non-orthogonal multiple access (NOMA) \cite{liu2018non,vaezi2018multiple,7263349,Zhiguo2017Survey,8422457,saito2013non}, where users can be split  in the power or code domain.
		Since additional users can be served with superposition coding at the transmitter and successive interference cancellation (SIC) at the receiver, the spectral efficiency of NOMA is generally higher than  conventional orthogonal multiple access schemes. 
		
		In downlink NOMA transmission,  the receiver side decodes the interference for all received strong messages  \cite{saito2013non}. Thus, the computation capacity of NOMA decoding is generally high.  
		To balance the decoding tradeoff  of intended signal and interference signal, the concept of rate splitting multiple access (RSMA)  was introduced in  \cite{485709,1056307,7470942,liu2020rate}. 
		For downlink RSMA transmission, the transmission message intended for each user is divided into both common and private parts. 
		All users intend to receive the common part of the message, i.e., common message, while only part of the users wish to receive and decode the specific private part of the message, i.e., private message. 
		At the user side, the common message is decoded first with regarding all private messages as interference, while the intended private message is decoded with only considering the private messages intended for other users as interference. 
		Through dynamically controlling the split of private and common messages, the computation complexity of RSMA can be adjusted to achieve the specific spectral efficiency requirements. 
		To implement RSMA for wireless communication systems, there are still many challenges, which include the resource allocation for private and common messages, decoding order optimization, system design in imperfect channel and hardware mismatch cases.
		
		There are many contributions investigating the problems of RSMA in wireless communication systems.
		The general challenges of RSMA were pointed out in \cite{7470942} for multiple input multiple output (MIMO) communication systems.
		To maximize the sum rate of all users, a distributed rate splitting technique was proposed in \cite{4039650}. 
		For a two-receiver  multiple input single output (MISO) communication system with limited rate feedback, the rate analysis was investigated in  \cite{7152864}. 
		Compared with NOMA and space-division multiple access (SDMA), it was shown in \cite{mao2018rate} that RSMA can achieve the best performance in terms of spectral and energy efficiency \cite{li2022rate}. 
		In particular, the energy efficiency optimization for RSMA and NOMA transmissions in a unmanned aerial vehicle assisted wireless communication system was investigated in \cite{rahmati2019energy} .
		Considering wireless energy transfer and information transmission,   the linear precoding method  for RSMA was investigated in \cite{mao2019rate}. 
		For the case with imperfect channel state information,  the sum rate maximization with partial channel state information for RSMA was studied in 
		\cite{7555358}, while  a downlink MISO RSMA system with bounded channel errors was  investigated in \cite{7513415}.
		
			{\color{myc1}The interplay between rate splitting with emerging technologies has been investigated. 
		With the help of reconfigurable intelligent surface,  the energy efficient resource allocation for reconfigurable intelligent surface assisted RSMA was investigated in
		\cite{yang2020energy},  where the phase shift, rate allocation, and trasnmit beamforming were jointly scheduled. 
		The learning based traffic prediction method was studied in  \cite{lu2020machine} for unmanned aerial vehicle enabled wireless communication system with rate splitting. 
		The neural network was proposed in \cite{pereira2022user} to solve the user clustering problem in hierarchical rate splitting communication systems.
		Due to coupled rate and power allocation relationship, the resource allocation of RSMA usually leads to the nonconvex problem, which can be solved by utilizing the learning techniques such as deep reinforcement learning. 
		Several deep learning algorithms were designed to solve various complex resource allocation problems for RSMA, which include total power minimization problem
		\cite{camana2022deep},  joint power control, beamforming design, and splitting optimization problem  \cite{liu2022joint,yang2020sum}, power allocation problem with limited channel state information knowledge \cite{hieu2021optimal,huang2022deep}, 	
		joint transmit power, user clustering, and resource block allocation problem \cite{hassan2021joint}, joint passive precoding at the reconfigurable intelligent surface and active precoding at the transmitter \cite{wu2022deep}.
		In the federated learning frameworks \cite{chen2021a}, the authors in 
		\cite{park2022completion} utilized RSMA for uplink model transmission to minimize the total delay of the whole system.
		A model-based deep  learning algorithm was developed to solve the receiver design problem of RSMA in \cite{loli2022model}. 
		
	   Recently, semantic communication has attracted a lot of attention \cite{weng2021semantic,wang2022performance,xie2021deep,chen2022performance,kang2022personalized,yang2023rate,kaewpuang2022cooperative,xie2022robust,wang2022wireless,zou2022goal}.
		For the wireless communication system characterized by Shannon capacity, the receiver side needs to recover the information that is exactly the same as the transmitted information. 
		However, in the emerging wireless applications such as virtual reality, personalized healthcare, autonomous driving, and the Internet-of-Everything (IoE), the wireless communication systems aim to meet the  multimodal quality-of-experience (QoE) requirements with massive data, which makes the traditional Shannon capacity characterized transmission infeasible.   
		Especially in human-computer interaction scenarios, humans can control multiple IoE devices simultaneously through voice and augmented/virtual reality commands, making communication ubiquitous in small-range wireless networks, which poses severe challenges to traditional bit-oriented communication challenge. Supporting real-time human-machine interaction and machine-to-machine interaction through the use of text, speech, images, and augmented/virtual reality is important for future wireless communications. In order to support this interaction, the important information finally received depends mainly on the intent, rather than the bit information dependence of common sense. These applications use advanced signal processing to facilitate the development of task-oriented semantic communication \cite{deniz2022beyond,qin2021semantic,xu2022edge}. In semantic communication, both transmitter and receiver share common knowledge, which can be used to extract small-size information at the transmitter and recover the original information at the receiver \cite{tong2021federated}. Similarly, in downlink RSMA, all users also need to receive both common information and private information. Due to the inherent similarity of common  knowledge and common message, RSMA can be utilized to enhance the system performance of downlink semantic communication. To our best knowledge, there is no prior works that consider the integration of  semantic communication and RSMA. }

		The main contributions of this paper
		include:
		\begin{itemize}
			\item The problem of wireless resource allocation and semantic information extraction for energy efficient semantic communications over wireless networks with rate splitting is investigated. In the considered model, the BS first extracts the semantic information from its large-scale data, and then transmits the small-sized semantic information to each user which recovers the original data based on the local common knowledge. 
			\item In the downlink transmission, the rate splitting scheme is adopted, while the private small-sized semantic information is transmitted through private message and the common knowledge is transmitted through common message. Due to limited wireless resource, both computational energy and transmission energy must be considered. This joint computation and communication problem is formulated as an optimization problem whose goal is to minimize the total energy consumption of the network under a latency constraint. 
			\item To solve this problem, an iterative algorithm is proposed where, at every step, closed-form solutions for semantic information extraction ratio  and computation frequency are derived. Numerical results show the effectiveness of the proposed algorithm.
		\end{itemize}
		The rest of this paper is organized as follows. The system model and problem formulation are described in Section \uppercase\expandafter{\romannumeral2}. The algorithm design is presented in Section \uppercase\expandafter{\romannumeral3}. Simulation results are analyzed in Section \uppercase\expandafter{\romannumeral4}. Conclusions are drawn in Section \uppercase\expandafter{\romannumeral5}.

		\section{System Model and Problem Formulation}
		Consider a downlink semantic wireless communication (SWC) network with one multiple-antenna BS and $K$ single-antenna users, as shown in Fig. \ref{sys1}. The BS is equipped with $N$ antennas and the set of users is denoted by $\mathcal K$.
		Each user $k$ has a large-sized data $\mathcal D_k$ to receive.
		Due to limited wireless resource, the BS needs to extract the small-sized semantic information from the original data $\mathcal D_k$.
		In the considered model, the BS first extracts the semantic information based on directional probability graph and then transmits the semantic information via rate splitting technique.  
		
		\begin{figure}
			\centering
			\includegraphics[width=3.5in]{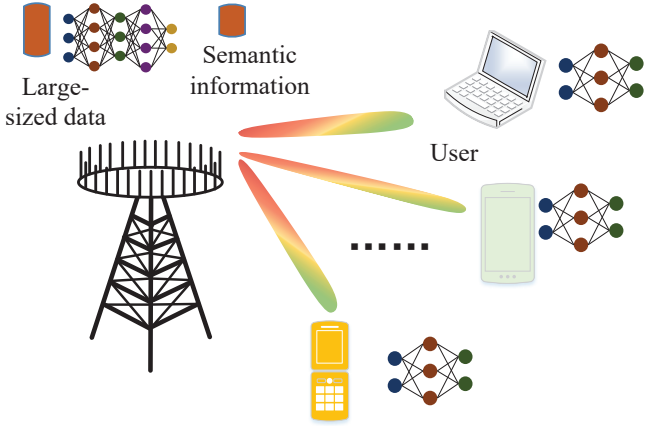}
			\vspace{-.5em}
			\caption{Illustration of the considered SWC network with rate splitting.}
			\vspace{-.5em}
			\label{sys1}
		\end{figure}

		\subsection{Semantic Communication Model}
		In this part, we utilize the directional probability graph to characterize the inherent information of the transmitted information.  In the directional probability graph, each vertex represents the semantic entity with different semantic levels. The higher level the semantic level is, the more complicated  the semantic information is. The link between any two vertexes represents the probability.
		
		To construct the directional probability graph, we use the deep neural network to train the  stored dataset, which includes three main steps.
		In the first step,  {\color{myc1}the semantic entity is recognized from the dataset, where the semantic entity means the names in text, including person names, place names, etc. The name of semantic entity is highly open (various types, flexible lengths, unregistered words), contains rich knowledge and highlights individuality}. Three common methods, i.e., 
		rule method, taxonomy method, and sequence labeling \cite{deng2018deep} can be used to identify the semantic entity. 
		The  semantic entity is presented as a vertex in the directional probability graph. 
		In the second step, the link between any two vertexes means the probability that one vertex can be linked with the other vertex. 
		Through training the dataset, the probability between two vertexes can be calculated via convolutional neural networks.   
		In the third step, the semantic information  fusion is conducted. For two vertexes, if the link probabilities between these two vertexes are higher than a predefined threshold. As a result, the final directional probability graph becomes a multi-tier graph, as shown in Fig. \ref{sys3}.

		\begin{figure}
			\centering
			\includegraphics[width=3.5in]{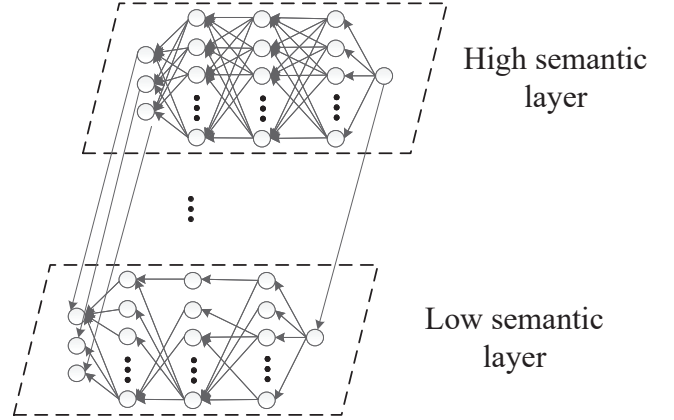}
			\vspace{-.5em}
			\caption{An example of the multi-level semantic information extraction.}
			\vspace{-.5em}
			\label{sys3}
		\end{figure}
		
		To obtain the small-size semantic communication, the extraction process includes two parts, as shown in Fig.~\ref{sys0}. In the first part, the directional probability graph is used to extract semantic information and the output is denoted by $\mathcal G(\mathcal D_k)$.
		To efficiently transmit information, in the second part, a subset $\mathcal S_k$ out of $\mathcal G(\mathcal D_k)$ is selected at user $k$, which is used for data transmission.
		
		\begin{figure}
			\centering
			\includegraphics[width=3.5in]{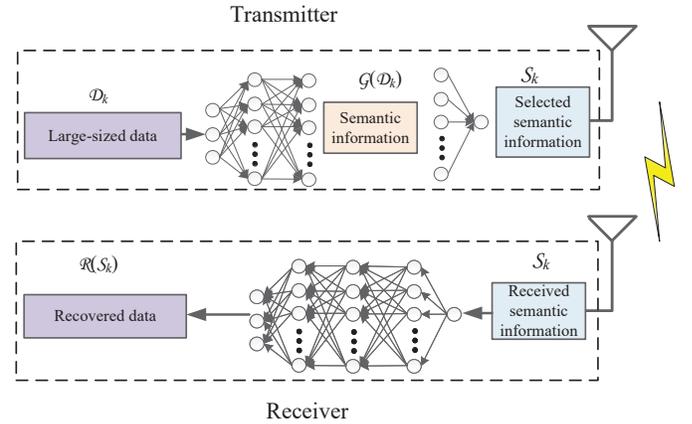}
			\vspace{-.5em}
			\caption{Illustration of the SWC model.}
			\vspace{-.5em}
			\label{sys0}
		\end{figure}

		At the user side, each user utilizes the shared common directional probability graph to recover the original data and the recovered data is denoted by $\mathcal R(\mathcal S_k)$.
		The semantic accuracy of the recovered data
		\begin{equation}
			u_k(\mathcal D_k, \mathcal S_k)=\frac{\sum_{i=1}^{|\mathcal R(\mathcal S_k)|} \min\{\sigma(\mathcal R(\mathcal S_k),s_{ki}'),\sigma(\mathcal D_k,s_{ki}')\}}
			{\sum_{i=1}^{|\mathcal R(\mathcal S_k)|} \sigma(\mathcal R(\mathcal S_k),s_{ki}')},
		\end{equation}
		where $|\mathcal R(\mathcal S_k)|$ is the number of bits in $\mathcal R(\mathcal S_k)$, $s_{ki}'$ denotes the $i$-th word in text or frame in video in $\mathcal R(\mathcal S_k)$, and $\sigma(\mathcal R(\mathcal S_k),s_{ki}')$ is the number of occurrences of $s_{ki}'$ in $\mathcal R(\mathcal S_k)$.
		
		\subsection{RSMA Model}
		In RSMA,  the message intended for each user can be split into two parts, i.e., common part and private part  \cite{clerckx2019rate}.
		The common parts from all users are collected and combined into a common message.
		Through sharing the same codebook for all users, the common message is encoded into the common message $s_0$, which all users need to decode. 
		The private part of each user $k$ is encoded into the private stream $s_k$, which is intended for the specific user $k$.
		As a result, the transmitted signal $\boldsymbol x$ of the BS can be written as:
		\begin{equation}\label{miso1eq1}
			\boldsymbol x=\sqrt{p_0}\boldsymbol w_0 s_0  + \sum_{k=1}^K \sqrt{p_k} \boldsymbol  w_k s_k,
		\end{equation}
		where $\boldsymbol w_0$ is the transmit beamforming of the common message $s_0$ , $\boldsymbol w_k$ is the transmit beamforming  of the private message $s_k$ intended for user $k$, $p_0$ is the transmit power of the common message $s_0$,  and $p_k$ is the transmit power of the  private message $s_k$.
		
		For user $k$, the received message can be represented by:
		\begin{equation}\label{miso1eq2}
			\boldsymbol{h_k}^H \boldsymbol x + n_k=\boldsymbol h_k^H \sqrt{p_0}\boldsymbol w_0 s_0  +\sum_{j=1}^K \sqrt{p_k} \boldsymbol h_k^H  \boldsymbol w_k s_j+n_k,
		\end{equation}
		where $\boldsymbol h_k$ stands for the channel  between user $k$ and the BS. 
		To decode the common message $s_0$, the rate of user $k$ can be given by:
		\begin{equation}\label{miso1eq3}
			c_k =
			B \log_2 \left( 1+ \frac{p_0|\boldsymbol h_k^H \boldsymbol w_0|^2}{ \sum_{j=1}^Kp_j|\boldsymbol h_k^H \boldsymbol w_j|^2+\sigma^2}
			\right).
		\end{equation}
		where $B$ is the bandwidth of the BS.
		Note that all users need to decode the same common message. To ensure that all users can successfully decode the common message, the rate of the common message can be set as \cite{mao2018rate}  
		\begin{align}\label{sys1eq5}
			c_0=\min_{k\in\mathcal K}c_k.
		\end{align}
		
		In our considered SWC with rate splitting, the common knowledge is shared by all users. Thus, the common knowledge required for semantic communication can be encoded in the common message. 
		Besides, the common message also includes the parts that are allocated for different users, i.e., the rate in the common message allocated to user $k$ is denoted by $a_k$. As a result, the rate constraint for the common message can be given by
		\begin{equation}\label{miso1eq6}
			a_0+\sum_{k=1}^Ka_k\leq c_k, \quad \forall k\in\mathcal K,
		\end{equation}
		where $a_0$ is the rate allocated to updated common knowledge that all users need to receive. 
		In SWC, $a_0$ represents the rate of  transmitting the information of  updated directional probability graph.

		For each user, the common message is decoded first, and then the common message can be subtracted for decoding the private message. As a result, the rate for decoding the private message for user $k$ can be calculated as
		\begin{align}\label{miso1eq7}
			r_k &=
			B \log_2 \left( 1+ \frac{p_k|\boldsymbol h_k^H \boldsymbol w_k|^2}{ \sum_{j=1,j\neq k}^K p_j|\boldsymbol h_k^H \boldsymbol w_j|^2+\sigma^2}
			\right).
		\end{align}

		\subsection{Transmission and Computation Model}
		{\color{myc1}For each user $k$, the computation time for extracting semantic information from data $\mathcal D_k$ is
		\begin{equation}
			t_{1k}=\frac{ y_{1k}(\mathcal D_k, \mathcal S_k)}{f_k},
		\end{equation}
		where $y_{1k}(\mathcal D_k, \mathcal S_k)$ is the required amount of  CPU cycles for calculating $\mathcal S_k$ out of $\mathcal D_k$, and $f_k$ is the computing capacity of user $k$.
		The local computation energy can be given by:
		\begin{equation}
			E_{1k}=\kappa y_{1k}(\mathcal D_k, \mathcal S_k) f_k^2,
		\end{equation}
		where $\kappa$ is a constant coefficient to measure the effective switched capacitance.}
		
		With private  rate \eqref{miso1eq7} and allocated common rate $a_k$, the downlink transmission time for transmitting $\mathcal S_k$ is given by
		\begin{equation}\label{sys1_tk21}
			t_{2k1}=\frac{Z(\mathcal S_k)}{r_k+a_k},
		\end{equation}
		where $Z(\mathcal S_k)$ is the data size of set $\mathcal S_k$. 
		To transmit the renewed information about the knowledge base, i.e.,  updated information of directional probability graph, the transmission time of all users can be formulated as
		\begin{equation}\label{sys1_t022}
			t_{0}=\frac{K_0}{a_0},
		\end{equation} 
		where $K_0$ is the size of updated information of directional probability graph. 
		Combining \eqref{sys1_tk21}  and \eqref{sys1_t022}, the downlink transmission time for user $k$ is
		\begin{equation}\label{sys1_tk2}
			t_{2k}=\max\{t_{2k1}, t_{0}\}.
		\end{equation} 
		
		The  transmission energy for sending $\mathcal S_k$ is
		\begin{equation}
			E_{2k}=t_{2k1} p_k,
		\end{equation}
		and the transmission energy for broadcasting updated information of directional probability graph is
		\begin{equation}
			E_{20}=t_{0} p_0.
		\end{equation}
		
		At user $k$, to recover the original data, the user needs to compute the semantic information $\mathcal S_k$.
		The computation time of user $k$
		\begin{equation}
			t_{3k}=  \frac{y_{2k}(\mathcal D_k, \mathcal S_k)}{g_k},
		\end{equation}
		where $y_{2k}(\mathcal D_k, \mathcal S_k)$ is the number of computation cycles of recovering $\mathcal D_k$ from $\mathcal S_k$ and $g_k$ is the computation capacity at user $k$.
		The total complete time for user $k$ includes computation time at the BS, downlink transmission time, and computation time at user $k$, as shown in Fig. \ref{sys2}.
		The overall complete  time  of user $k$ including both computation and computation is 
		\begin{align}
			t_{k}&= t_{1k}+t_{2k}+t_{3k}
			\nonumber\\&= \frac{y_{1k}(\mathcal D_k, \mathcal S_k)}{f_k}+
			\max\left\{\frac{Z(\mathcal S_k)}{r_k+a_k}, \frac{K_0}{a_0}   \right\}
			+ \frac{y_{2k}(\mathcal D_k, \mathcal S_k)}{g_k}.
		\end{align}
		
		\begin{figure}
			\centering
			\includegraphics[width=3.5in]{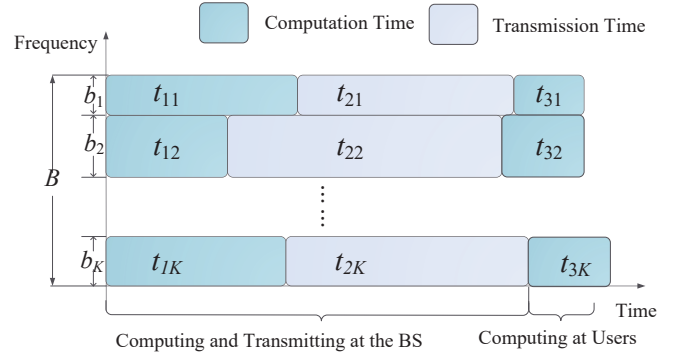}
			\vspace{-.5em}
			\caption{Illustration of the computation and communication time.}
			\vspace{-.5em}
			\label{sys2}
		\end{figure}
		The energy consumption at user $k$ is
		\begin{equation}
			E_{3k}= \kappa   y_{2k}(\mathcal D_k, \mathcal S_k)  g_k^2.
		\end{equation}
		
		With the above considered model, the total communication and computation energy consumption of the system is
		\begin{align}
			E=&\sum_{k=1}^K (E_{1k}+E_{2k}+E_{3k})+E_0
			\nonumber \\
			=&\sum_{k=1}^K \left( \kappa y_{1k}(\mathcal D_k, \mathcal S_k) f_k^2
			+  \frac{Z(\mathcal S_k)p_k}{r_k+a_k}
			+\kappa y_{2k}(\mathcal D_k, \mathcal S_k) g_k^2\right)
			\nonumber \\&
			+\frac{K_0p_0}{a_0}.
		\end{align}
		
		We aim to minimize the total energy consumption of the whole system with considering the completion time, transmit information accuracy,  computation capacity, rate allocation, and power allocation constraints. Mathematically, the formulated total energy minimization problem can be given by:
		\begin{subequations}\label{sys1min1}
			\begin{align}
				\mathop{\min}_{\mathcal S,  \boldsymbol f,  \boldsymbol g,  \boldsymbol p, \boldsymbol a, \boldsymbol w} \:&
				E,
				\tag{\theequation}  \\
				\textrm{s.t.} \:\:\:
				& \frac{y_{1k}(\mathcal D_k, \mathcal S_k)}{f_k}+
				\max\left\{\frac{Z(\mathcal S_k)}{r_k+a_k}, \frac{K_0}{a_0}   \right\}
				\nonumber\\&
				+ \frac{y_{2k}(\mathcal D_k, \mathcal S_k)}{g_k}\leq T,\quad \forall k\in \mathcal K,\\
				&  u_k(\mathcal D_k, \mathcal S_k)\geq A_k, \quad \forall k \in \mathcal K,\\
				& \mathcal S_k \subseteq  \mathcal G(\mathcal D_k) \quad \forall k \in \mathcal K,\\
				&a_0+	\sum_{k=1}^Ka_k\leq c_k, \quad \forall k\in\mathcal K, \\
				& \sum_{k=1}^K f_k\leq F^{\max}\\
				& \sum_{k=0}^K p_0\leq P^{\max}\\
				& a_k, f_k,p_k\geq 0,\quad \forall k,\\
				& \|\boldsymbol{w}_k\|=1,\quad \forall k \in \mathcal K\cup\{0\},\\
				&  0\leq g_k\leq g_k^{\max}, \quad \forall k \in \mathcal K,
			\end{align}
		\end{subequations}
		where
		$\mathcal S =\{\mathcal S_1, \cdots, \mathcal S_K\}$,
		$\boldsymbol f=[f_0, f_1, \cdots, f_K]^T$,
		$\boldsymbol g=[g_1, \cdots, g_K]^T$,
		$\boldsymbol p=[p_1, \cdots, p_K]^T$,
		$\boldsymbol a=[a_0, \cdots, p_K]^T$,
		$\boldsymbol w=[\boldsymbol w_0; \boldsymbol w_1;\cdots; \boldsymbol w_K]$,
		$T$ is the maximum communication delay of the system,
		$A_k$ is the minimum semantic accuracy for user $k$,
		$F^{\max}$ is the maximum computation capacity at the BS,
		$P^{\max}$ is the  transmission power of the BS,
		and 	$g_k^{\max}$ is the maximum local computation capacity of user $k$. 
		Since both objective function and constraints (\ref{sys1min1}a)-(\ref{sys1min1}c) are nonconvex, it is generally hard to solve this problem. 
		To solve this problem, we propose an iterative algorithm using the alternating method and successive convex approximation (SCA) approach.
		
		\section{Algorithm Design}
		In this section, an alternating algorithm is proposed to iteratively solve problem \eqref{sys1min1} through optimizing three subproblems, i.e., semantic information extraction subproblem,  computation capacity subproblem, joint power control, rate allocation and beamforming design subproblem. 
		\subsection{Semantic Information Extraction}
		With given computation capacity, power control, rate allocation, and beamforming design, problem \eqref{sys1min1}  can be simplified as
		\begin{subequations}\label{alg2min1}
			\begin{align}
				\mathop{\min}_{\mathcal S} \:&
				\sum_{k=1}^K \left( \kappa y_{1k}(\mathcal D_k, \mathcal S_k) f_k^2
				+  \frac{Z(\mathcal S_k)p_k}{r_k+a_k}
				+\kappa y_{2k}(\mathcal D_k, \mathcal S_k) g_k^2\right)
				\nonumber \\&
				+\frac{K_0p_0}{a_0}
				\tag{\theequation}  \\
				\textrm{s.t.} \:\:\:
				& \frac{y_{1k}(\mathcal D_k, \mathcal S_k)}{f_k}+
				\max\left\{\frac{Z(\mathcal S_k)}{r_k+a_k}, \frac{K_0}{a_0}   \right\}	\nonumber\\&
				+ \frac{y_{2k}(\mathcal D_k, \mathcal S_k)}{g_k}\leq T,\quad \forall k\in \mathcal K,\\
				&  u_k(\mathcal D_k, \mathcal S_k)\geq A_k, \quad \forall k \in \mathcal K,\\
				& \mathcal S_k \subseteq  \mathcal G(\mathcal D_k) \quad \forall k \in \mathcal K.
			\end{align}
		\end{subequations}
		
		Problem \eqref{alg2min1} is hard to solve because of  two general difficulties. 
		The first difficulty lies in the discrete value space of  variable $\mathcal S_k $, which leads to the discrete optimization problem and the complexity to find the optimal solution is usually extremely too high. 
		The second difficulty is the implicit expressions of accuracy function $u_k(\mathcal D_k, \mathcal S_k)$ and computation functions $f_{1k}(\mathcal D_k, \mathcal S_k)$ and $f_{2k}(\mathcal D_k, \mathcal S_k)$.
		\begin{figure}[t]
			\centering
			\includegraphics[width=3.8in]{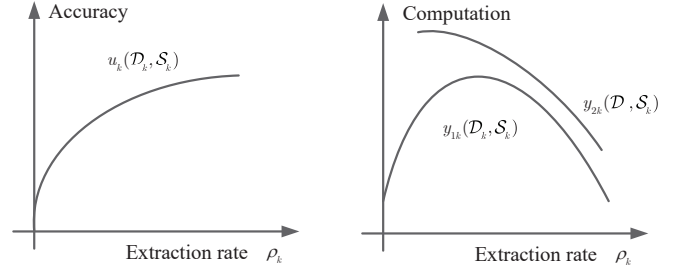}
			\vspace{-.5em}
			\caption{Illustration of the accuracy and computation functions versus the extraction rate.}
			\vspace{-.5em}
			\label{alg2fig1}
		\end{figure}
		
		To handle the first difficulty, we introduce the new variable, extraction rate $\rho_k$, which is defined as
		\begin{equation}\label{alg2min1eq1}
			\rho_k=\frac{Z(\mathcal S_k)}{Z(  \mathcal G(\mathcal D_k))}.
		\end{equation}
		Absolutely, the value of $\rho_k$ lies in (0,1]. 
		In the following, we use variable to replace  $\mathcal S_k$ for the purpose of obtaining the insights about extraction rate. 
		
		To handle the second difficulty, we first analyze the trend of accuracy and computation functions. 
		{\color{myc1}For the accuracy function, the accuracy always increases with the extraction rate since more information can be used to recover the original data, as shown in Fig. \ref{alg2fig1}. As a result, the minimum accuracy constraint (\ref{alg2min1}c) can be equivalent to 
		\begin{equation}\label{alg2min1eq2}
			\rho_k\geq \Gamma_k,
		\end{equation}
		where $\Gamma_k$ is the minimum extraction rate satisfying $ u_k(\mathcal D_k, \Gamma_k)= A_k$.
		For computation function $y_{1k}(\mathcal D_k,  \rho_k)$ is the number of required CPU cycles for computing the information with extraction rate $\rho_k$ out of $\mathcal D_k$,   $y_{1k}(\mathcal D_k, \mathcal S_k)$ includes two parts.  The first part is computing the directional probability graph, which can be modeled as a function only related to the size of $\mathcal D_k$, i.e., $y_{3k}(\mathcal D_k)$. 
		The second part is selecting the information with extraction rate $\rho_k$ out from the  directional probability graph. Since $\rho_k=0$ or $\rho_k=1$, the selection scheme is straightforward, which leads to the lowest computation cycles. Hence, the computation of the second part first increases and then decreases with the extraction rate $\rho_k$.
		As an example,  computation function $y_{1k}(\mathcal D_k, \rho_k)$ can be expressed as 
		\begin{equation}\label{alg2min1eq3}
			y_{1k}(\mathcal D_k, \rho_k)=y_{3k}(\mathcal D_k)+C_{k1}(\rho_k-C_{k2})^{C_{k3}},
		\end{equation}
		where $C_{k1}>0$, $C_{k2}\in(0,1)$, and $C_{k3}>0$ are constant parameters and theses parameters can be obtained through simulations.
		For computation function $y_{2k}(\mathcal D_k, \rho_k)$, the number of computation cycles decreases with $\rho_k$ since more semantic information can be helpful in recovering the original information. As an example, the computation function $y_{2k}(\mathcal D_k, \rho_k)$ can be expressed as
		\begin{equation}\label{alg2min1eq5}
			y_{2k}(\mathcal D_k, \rho_k)=C_{k4}\rho_k^{-C_{k5}},
		\end{equation}
		where $C_{k4}>0$ and $C_{k5}>0$ are constant parameters  through simulations.}
		
		With the above variable substitution \eqref{alg2min1eq1} and expressions \eqref{alg2min1eq2}-\eqref{alg2min1eq5}, problem \eqref{alg2min1} can be reformulated as:
		\begin{subequations}\label{alg2min1_2}
			\begin{align}
				\mathop{\min}_{\pmb \rho} \:&
				\sum_{k=1}^K \Big( \kappa  f_k^2(y_{3k}(\mathcal D_k)+C_{k1}(\rho_k-C_{k2})^{C_{k3}})
					\nonumber\\&+  \frac{Z(\mathcal G(\mathcal D_k))p_k\rho_k}{r_k+a_k}
				+\kappa C_{k4}\rho_k^{-C_{k5}}g_k^2\Big)
				\nonumber \\&
				+\frac{K_0p_0}{a_0}
				\tag{\theequation}  \\
				\textrm{s.t.} \:\:\:
				& \frac{y_{3k}(\mathcal D_k)+C_{k1}(\rho_k-C_{k2})^{C_{k3}}}{f_k}
				\nonumber\\&+
				\max\left\{\frac{Z(\mathcal G(\mathcal D_k))\rho_k}{r_k+a_k}, \frac{K_0}{a_0}   \right\}
				+ \frac{C_{k4}\rho_k^{-C_{k5}}}{g_k}\leq T,	\nonumber\\&\quad \forall k\in \mathcal K,\\
				& \Gamma_k\leq\rho_k\leq 1, \quad \forall k \in \mathcal K,
			\end{align}
		\end{subequations}
		where $\boldsymbol \rho=[\rho_1, \cdots, \rho_K]^T$.
		Since both objective function and feasible set are convex, problem \eqref{alg2min1_2} is a convex problem. Thus, we can apply the dual method to obtain the Karush-Kuhn-Tucker (KKT) point. To calculate the  solution of problem \eqref{alg2min1_2}, we can obtain the following theorem. 
		\begin{theorem}
			The optimal solution of problem \eqref{alg2min1_2} is 
			\begin{eqnarray}\label{alg2min1_2Th1eq1}
				\rho_k^* = \left\{ \begin{array}{ll}
					\rho_{k1}^*(\lambda_{1k1}) & \textrm{if $\rho_{k1}^*(\lambda_{11})\geq \frac{K_0(a_k+r_k)}{a_0Z(\mathcal G(\mathcal D_k))}$}\\
					\rho_{k2}^*(\lambda_{1k2})  & \textrm{if $\rho_{k2}^*(\lambda_{12})< \frac{K_0(a_k+r_k)}{a_0Z(\mathcal G(\mathcal D_k))}$}
				\end{array} \right.,
			\end{eqnarray}
			where $\rho_{k1}^*(\lambda_{1k})$ and  $\rho_{k2}^*(\lambda_{1k})$ are respectively the solutions to $\frac{\partial \mathcal L_1(\boldsymbol \rho, \lambda_{1k})}{\partial \rho_k}=0$ in \eqref{alg2min1_2eq2} and \eqref{alg2min1_2eq3},  $\lambda_{1k1}$ and $\lambda_{1k2}$ respectively satisfy
			\begin{align}\label{alg2min1_2Th1eq1cons1}
				&\frac{y_{3k}(\mathcal D_k)+C_{k1}(\rho_{k1}^*(\lambda_{1k1})|_{\Gamma_k}^1-C_{k2})^{C_{k3}}}{f_k}+
				\frac{Z(\mathcal G(\mathcal D_k)) }{r_k+a_k} 
				\nonumber\\&+ \frac{C_{k4}(\rho_{k1}^*(\lambda_{1k1})|_{\Gamma_k}^1)^{-C_{k5}}}{g_k} = T,
			\end{align}
			\begin{align}\label{alg2min1_2Th1eq1cons2}
				&\frac{y_{3k}(\mathcal D_k)+C_{k1}(\rho_{k2}^*(\lambda_{1k2})|_{\Gamma_k}^1-C_{k2})^{C_{k3}}}{f_k}+
				\frac{K_0}{a_0}
					\nonumber\\&+ \frac{C_{k4}(\rho_{k2}^*(\lambda_{1k2})|_{\Gamma_k}^1)^{-C_{k5}}}{g_k} = T,
			\end{align}
			with  $a|_b^c=\min\{\max\{a,b\},c\}$.
		\end{theorem}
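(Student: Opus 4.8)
The plan is to exploit the fact that problem \eqref{alg2min1_2} is fully \emph{separable}: once the constant term $K_0p_0/a_0$ is set aside, the objective is a sum $\sum_k h_k(\rho_k)$ and every constraint involves a single $\rho_k$, so the program splits into $K$ independent scalar convex problems, one per user. It therefore suffices to solve, for a fixed $k$, the one-dimensional convex problem of minimizing $h_k(\rho_k)$ over $\rho_k\in[\Gamma_k,1]$ subject to the latency inequality; since this subproblem is convex, its KKT conditions are necessary and sufficient, and I will read the claimed closed form directly off them.

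The only non-smooth ingredient of each scalar subproblem is the $\max\{Z(\mathcal G(\mathcal D_k))\rho_k/(r_k+a_k),\,K_0/a_0\}$ in the latency constraint. I would remove it by partitioning the feasible interval at the breakpoint $\tau_k:=K_0(a_k+r_k)/(a_0 Z(\mathcal G(\mathcal D_k)))$, where the two arguments of the $\max$ are equal. On the piece $\rho_k\ge\tau_k$ the private-rate term attains the maximum and the latency constraint becomes the smooth convex relation appearing in \eqref{alg2min1_2Th1eq1cons1}; on the piece $\rho_k<\tau_k$ the constant $K_0/a_0$ attains it and the constraint becomes the relation in \eqref{alg2min1_2Th1eq1cons2}. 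For each piece I would dualize only the (now smooth) latency constraint with a single multiplier $\lambda_{1k}$, keep the box $\Gamma_k\le\rho_k\le1$ explicit, write the corresponding partial Lagrangian $\mathcal L_1(\boldsymbol\rho,\lambda_{1k})$, and differentiate in $\rho_k$ to obtain the two stationarity equations \eqref{alg2min1_2eq2} and \eqref{alg2min1_2eq3}.

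Next, within each piece I would solve $\partial\mathcal L_1(\boldsymbol\rho,\lambda_{1k})/\partial\rho_k=0$ for $\rho_k$ as a function of $\lambda_{1k}$, giving the (in general implicitly defined) roots $\rho_{k1}^*(\lambda_{1k})$ and $\rho_{k2}^*(\lambda_{1k})$; strict monotonicity of this derivative in $\rho_k$, inherited from strict convexity of the objective plus the latency penalty, makes each root unique and well defined. Enforcing the box via the clipping operator $a|_b^c=\min\{\max\{a,b\},c\}$ yields the feasible candidate $\rho_{k\ell}^*(\lambda_{1k})|_{\Gamma_k}^1$, and complementary slackness fixes the multiplier: if that point already satisfies the latency constraint with strict inequality we take $\lambda_{1k}=0$, otherwise $\lambda_{1k}$ is the unique value rendering the latency constraint active, which is precisely the defining equation \eqref{alg2min1_2Th1eq1cons1} for the first piece and \eqref{alg2min1_2Th1eq1cons2} for the second (monotonicity of the latency left-hand side in $\lambda_{1k}$ gives uniqueness of this value).

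The final step is to stitch the two pieces together. The candidate produced by the first piece is the genuine minimizer exactly when it lies in the region where that piece's constraint was valid, i.e.\ $\rho_{k1}^*(\lambda_{1k1})\ge\tau_k$, and symmetrically the second piece is active when $\rho_{k2}^*(\lambda_{1k2})<\tau_k$; because the objective and the latency left-hand side are continuous at $\tau_k$ and monotone on each side of it, these two cases are mutually exclusive and jointly exhaustive, which is exactly the piecewise statement \eqref{alg2min1_2Th1eq1}. I expect the main obstacle to be this consistency bookkeeping — arguing that exactly one branch returns a point in its own region (equivalently, that the scalar subproblem is feasible and the selector in \eqref{alg2min1_2Th1eq1} is the correct one) — together with the careful treatment of the corner cases $\rho_k^*\in\{\Gamma_k,1\}$, where the box multiplier and the latency multiplier may both be active.
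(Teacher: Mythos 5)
Your proposal follows essentially the same route as the paper: dualize the latency constraint, split the $\max$ at the breakpoint $K_0(a_k+r_k)/(a_0 Z(\mathcal G(\mathcal D_k)))$ to obtain the two stationarity equations \eqref{alg2min1_2eq2} and \eqref{alg2min1_2eq3}, invoke monotonicity of the derivative to get unique roots via bisection, clip to $[\Gamma_k,1]$, and use complementary slackness together with the region-consistency check to select the branch in \eqref{alg2min1_2Th1eq1}. Your added care about the $\lambda_{1k}=0$ (inactive-constraint) case and the corner cases at $\Gamma_k$ and $1$ is a minor refinement of, not a departure from, the paper's KKT argument.
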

		\begin{proof}
			Denoting $\lambda_{11}, \cdots, \lambda_{1K}>0$  as the Lagrange multiplier variables associated with constraint (\ref{alg2min1_2}a), we obtain the Lagrange function of problem \eqref{alg2min1_2} as
			\begin{eqnarray}\label{alg2min1_2eq1}
				\begin{aligned}
					&\mathcal L_1(\boldsymbol \rho, \boldsymbol \lambda_1)
					=	\sum_{k=1}^K \Big( \kappa  f_k^2(y_{3k}(\mathcal D_k)
					+C_{k1}(\rho_k-C_{k2})^{C_{k3}})
					\\&+  \frac{Z(\mathcal G(\mathcal D_k))p_k\rho_k}{r_k+a_k}
					+\kappa C_{k4}\rho_k^{-C_{k5}}g_k^2\Big)
					+\frac{K_0p_0}{a_0}	\\&+	\sum_{k=1}^K\lambda_{1k}\Big(
					\frac{y_{3k}(\mathcal D_k)+C_{k1}(\rho_k-C_{k2})^{C_{k3}}}{f_k}
					\\&+
					\max\left\{\frac{Z(\mathcal G(\mathcal D_k))\rho_k}{r_k+a_k}, \frac{K_0}{a_0}   \right\}
					\\
					&+ \frac{C_{k4}\rho_k^{-C_{k5}}}{g_k}-T
					\Big),
				\end{aligned}
			\end{eqnarray}
			where $\boldsymbol \lambda_1=[\lambda_{11},\cdots,\lambda_{1K}]^T$.
			The first derivative of \eqref{alg2min1_2eq1} becomes
			\begin{eqnarray}\label{alg2min1_2eq2}
				\begin{aligned}
					&\frac{\partial \mathcal L_1(\boldsymbol \rho, \boldsymbol \lambda_1)}{\partial \rho_k}
					= \kappa  f_k^2 C_{k1}C_{k3}(\rho_k-C_{k2})^{C_{k3}-1}
					\\&+  \frac{Z(\mathcal G(\mathcal D_k))p_k}{r_k+a_k}
					-\kappa C_{k4}C_{k5}\rho_k^{-C_{k5}-1}g_k^2
					\\&
					+\lambda_{1k}\Big(
					\frac{ C_{k1}C_{k3}(\rho_k-C_{k2})^{C_{k3-1}}}{f_k}+
					\frac{Z(\mathcal G(\mathcal D_k)) }{r_k+a_k} 
		\\&	-\frac{C_{k4}C_{k5}\rho_k^{-C_{k5}-1}}{g_k}
					\Big)
				\end{aligned}
			\end{eqnarray}
			for 
			\begin{equation}\label{alg2min1_2eq2_2}
				\rho_k \geq \frac{K_0(a_k+r_k)}{a_0Z(\mathcal G(\mathcal D_k))},
			\end{equation}
			and 
			\begin{eqnarray}\label{alg2min1_2eq3}
				\begin{aligned}
					&\frac{\partial \mathcal L_1(\boldsymbol \rho, \boldsymbol\lambda_1)}{\partial \rho_k}
					= \kappa  f_k^2 C_{k1}C_{k3}(\rho_k-C_{k2})^{C_{k3}-1}
					\\&+  \frac{Z(\mathcal G(\mathcal D_k))p_k}{r_k+a_k}
					-\kappa C_{k4}C_{k5}\rho_k^{-C_{k5}-1}g_k^2
					\\&
					+\lambda_{1k}\Big(
					\frac{ C_{k1}C_{k3}(\rho_k-C_{k2})^{C_{k3-1}}}{f_k}
					-\frac{C_{k4}C_{k5}\rho_k^{-C_{k5}-1}}{g_k}
					\Big)
				\end{aligned}
			\end{eqnarray}
			for 
			\begin{equation}\label{alg2min1_2eq3_2}
				\rho_k <\frac{K_0(a_k+r_k)}{a_0Z(\mathcal G(\mathcal D_k))},
			\end{equation}
			Denote the solution of $\frac{\partial \mathcal L_1(\boldsymbol \rho, \boldsymbol \lambda_1)}{\partial \rho_k}=0$ to equations \eqref{alg2min1_2eq2} and \eqref{alg2min1_2eq3} by $\rho_{k1}^*(\lambda_1)$ and  $\rho_{k2}^*(\lambda_{1k})$, respectively. 
			Note that the left hand sides of \eqref{alg2min1_2eq2} and \eqref{alg2min1_2eq3} are monotonically increasing with respect to $\rho_k$,  solutions $\rho_{k1}^*(\lambda_{1k})$ and  $\rho_{k2}^*(\lambda_{1k})$ can be obtained via the bisection method. 
			Considering constraints (\ref{alg2min1_2}b), \eqref{alg2min1_2eq2_2}, and \eqref{alg2min1_2eq3_2}, the Lagrange multiplier should meet the KKT condition, i.e., the optimal solution of problem can be presented in  \eqref{alg2min1_2Th1eq1}.
		\end{proof}

		\subsection{Optimal Computation Capacity}
		With given semantic information extraction,  power control, rate allocation, and beamforming design, problem \eqref{sys1min1}  can be simplified as
		\begin{subequations}\label{alg2min2}
			\begin{align}
				\mathop{\min}_{\boldsymbol f,  \boldsymbol g} \:&
				\sum_{k=1}^K \left( \kappa y_{1k}(\mathcal D_k, \mathcal S_k) f_k^2
				+  \frac{Z(\mathcal S_k)p_k}{r_k+a_k}
				+\kappa y_{2k}(\mathcal D_k, \mathcal S_k) g_k^2\right)
				\nonumber \\&
				+\frac{K_0p_0}{a_0}
				\tag{\theequation}  \\
				\textrm{s.t.} \:\:\:
				& \frac{y_{1k}(\mathcal D_k, \mathcal S_k)}{f_k}+
				\max\left\{\frac{Z(\mathcal S_k)}{r_k+a_k}, \frac{K_0}{a_0}   \right\}	\nonumber\\&
				+ \frac{y_{2k}(\mathcal D_k, \mathcal S_k)}{g_k}\leq T,\quad \forall k\in \mathcal K,\\
				& \sum_{k=1}^K f_k\leq F^{\max}\\
				&f_k\geq 0,\quad \forall k,\\
				&  0\leq g_k\leq g_k^{\max}, \quad \forall k \in \mathcal K.
			\end{align}
		\end{subequations}
		The Language function of  problem \eqref{alg2min2} can be given by
		\begin{eqnarray}\label{alg2min2_2eq1_2}
			\begin{aligned}
			&	\mathcal L_2(\boldsymbol f, \boldsymbol g, \boldsymbol \lambda_2,\lambda_3)
				=	\sum_{k=1}^K \Big( \kappa y_{1k}(\mathcal D_k, \mathcal S_k) f_k^2
				+  \frac{Z(\mathcal S_k)p_k}{r_k+a_k}
					\\&+\kappa y_{2k}(\mathcal D_k, \mathcal S_k) g_k^2\Big)
				+\frac{K_0p_0}{a_0}
				\\&
				+\sum_{k=1}^K\lambda_{2k}\Big(\frac{y_{1k}(\mathcal D_k, \mathcal S_k)}{f_k}+
				\max\left\{\frac{Z(\mathcal S_k)}{r_k+a_k}, \frac{K_0}{a_0}   \right\}	\\&
				+ \frac{y_{2k}(\mathcal D_k, \mathcal S_k)}{g_k}- T\Big)
				+\lambda_3\left(\sum_{k=1}^K f_k- F^{\max}\right),
			\end{aligned}
		\end{eqnarray}
		where  $\boldsymbol \lambda_2=[\lambda_{21},\cdots,\lambda_{2K}]^T$ is the Language multiplier associated with constraint (\ref{alg2min2}a) and $\lambda_3>0$  is the Language multiplier associated with constraint (\ref{alg2min2}b). 
		The first derivative of \eqref{alg2min2_2eq1_2} becomes
		\begin{eqnarray}\label{alg2min2_2eq2}
			\begin{aligned}
				&\frac{\partial \mathcal L_2(\boldsymbol f, \boldsymbol g, \boldsymbol\lambda_2,\lambda_3)}{\partial f_k}
			\\	= & 2 \kappa y_{1k}(\mathcal D_k, \mathcal S_k) g_k- \frac{\lambda_{2k}y_{1k}(\mathcal D_k, \mathcal S_k)}{g_k^2}+\lambda_3
			\end{aligned}
		\end{eqnarray}
		\begin{eqnarray}\label{alg2min2_2eq3}
			\begin{aligned}
				&\frac{\partial \mathcal L_2(\boldsymbol f, \boldsymbol g,\boldsymbol\lambda_2,\lambda_3)}{\partial g_k}\\
				=  &2 \kappa y_{2k}(\mathcal D_k, \mathcal S_k) f_k- \frac{\lambda_{2k}y_{2k}(\mathcal D_k, \mathcal S_k)}{f_k^2}
			\end{aligned}
		\end{eqnarray}
		Setting $\frac{\partial \mathcal L_2(\boldsymbol f, \boldsymbol g,\boldsymbol\lambda_2,\lambda_3)}{\partial f_k}=0$ and $\frac{\partial \mathcal L_2(\boldsymbol f, \boldsymbol g,\boldsymbol\lambda_2,\lambda_3)}{\partial g_k}=0$ yields
		\begin{eqnarray}\label{alg2min2_2eq5}
			2 \kappa y_{1k}(\mathcal D_k, \mathcal S_k) f_k^3+\lambda_3 f_k^2- \lambda_{2k}y_{1k}(\mathcal D_k, \mathcal S_k)=0,
		\end{eqnarray}
		\begin{eqnarray}
			g_k= \left(\frac{\lambda_{2k}y_{2k}(\mathcal D_k, \mathcal S_k)}{  2 \kappa y_{2k}(\mathcal D_k, \mathcal S_k)}\right)^{\frac 13}.
		\end{eqnarray}
		The value of $f_k$ can be obtained via solving the cubic function in \eqref{alg2min2_2eq5}. 
		Having obtained the value of computation capacity $f_k$ and $g_k$, the value of Language multiplier can be updated via the gradient method. 
		In the $t$-th iteration, the value of $\lambda_{2k}$ and $\lambda_3$ are updated by
		\begin{align}
			\lambda_{2k}(t)=&\Bigg[\lambda_{2k}(t-1)-\upsilon(t)\Big(\frac{y_{1k}(\mathcal D_k, \mathcal S_k)}{f_k}+\nonumber\\&
			\max\left\{\frac{Z(\mathcal S_k)}{r_k+a_k}, \frac{K_0}{a_0}   \right\}
			+ \frac{y_{2k}(\mathcal D_k, \mathcal S_k)}{g_k}- T\Big)\Bigg]^+,
		\end{align}
	and
		\begin{equation}
			\lambda_3(t)=\left[\lambda_3(t-1)-\upsilon(t)\left(\sum_{k=1}^K f_k- F^{\max}\right)\right]^+,
		\end{equation}
		where $[a]^+=\max{a,0}$ and $\upsilon(t)>0$ is the dynamic step size. 
		Through iteratively updating $(f_k,g_k)$ and $(\lambda_{2k},\lambda_3)$, the overall procedure yields the global optimal solution of problem \eqref{alg2min2}.
		
		\subsection{Joint Power Control, Rate Allocation, and Beamforming Design}
		With given semantic information extraction and computation capacity, problem \eqref{sys1min1}  can be simplified as
		\begin{subequations}\label{miso1min0}
			\begin{align}
				\mathop{\min}_{ \boldsymbol p, \boldsymbol a, \boldsymbol w} \:&
				\sum_{k=1}^K \left( \kappa y_{1k}(\mathcal D_k, \mathcal S_k) f_k^2
				+  \frac{Z(\mathcal S_k)p_k}{r_k+a_k}
				+\kappa y_{2k}(\mathcal D_k, \mathcal S_k) g_k^2\right)
				\nonumber \\&
				+\frac{K_0p_0}{a_0},
				\tag{\theequation}  \\
				\textrm{s.t.} \:\:\:
				& \frac{y_{1k}(\mathcal D_k, \mathcal S_k)}{f_k}+
				\max\left\{\frac{Z(\mathcal S_k)}{r_k+a_k}, \frac{K_0}{a_0}   \right\}\nonumber\\&
				+ \frac{y_{2k}(\mathcal D_k, \mathcal S_k)}{g_k}\leq T,\quad \forall k\in \mathcal K,\\
				&a_0+	\sum_{k=1}^Ka_k\leq c_k, \quad \forall k\in\mathcal K, \\
				& \sum_{k=0}^K p_0\leq P^{\max}\\
				& a_0, a_k, p_k\geq 0,\quad \forall k,\\
				& \|\boldsymbol{w}_k\|=1,\quad \forall k \in \mathcal K\cup\{0\},
			\end{align}
		\end{subequations}
		
		Problem \eqref{miso1min0} is nonconvex owing  to the nonconvex objective function and constraints (\ref{miso1min0}a),  (\ref{miso1min0}b) and  (\ref{miso1min0}e).
		{\color{myc1}To handle the nonconvexity of the objective function, we introduce new variable $r_k$ and use variable $p_k^2$ to replace power $p_k$.} Thus, problem \eqref{miso1min0}  can be equivalently transformed to
		\begin{subequations}\label{miso1min0_2}
			\begin{align}
				\mathop{\min}_{ \boldsymbol p, \boldsymbol a, \boldsymbol r, \boldsymbol w} \:&
				\sum_{k=1}^K \left( \kappa y_{1k}(\mathcal D_k, \mathcal S_k) f_k^2
				+  \frac{Z(\mathcal S_k)p_k^2}{r_k+a_k}
				+\kappa y_{2k}(\mathcal D_k, \mathcal S_k) g_k^2\right)
				\nonumber \\&
				+\frac{K_0p_0^2}{a_0},
				\tag{\theequation}  \\
				\textrm{s.t.} \:\:\:
				& \frac{y_{1k}(\mathcal D_k, \mathcal S_k)}{f_k}+
				\max\left\{\frac{Z(\mathcal S_k)}{r_k+a_k}, \frac{K_0}{a_0}   \right\}	\nonumber \\&
				+ \frac{y_{2k}(\mathcal D_k, \mathcal S_k)}{g_k}\leq T,\quad \forall k\in \mathcal K,\\
				&a_0+	\sum_{k=1}^Ka_k\leq B \log_2 \left( 1+ \frac{p_0^2|\boldsymbol h_k^H \boldsymbol w_0|^2}{ \sum_{j=1}^Kp_j^2|\boldsymbol h_k^H \boldsymbol w_j|^2+\sigma^2}
				\right), 	\nonumber \\&\quad \forall k\in\mathcal K, \\
				&r_k\leq B \log_2 \left( 1+ \frac{p_k^2|\boldsymbol h_k^H \boldsymbol w_k|^2}{ \sum_{j=1,j\neq k}^K p_j^2|\boldsymbol h_k^H \boldsymbol w_j|^2+\sigma^2}
				\right),	\nonumber \\& \quad \forall k\in\mathcal K, \\
				& a_0, a_k, p_k\geq 0,\quad \forall k,\\
				& \|\boldsymbol{w}_k\|\leq1,\quad \forall k \in \mathcal K\cup\{0\},
			\end{align}
		\end{subequations}
		where $\boldsymbol r=[r_0,r_1,\cdots, r_K]^T$, the objective function is convex,
		and  constraint (\ref{miso1min0_2}e) is replaced  by the inequality  without loss of generality. 
		In problem \eqref{miso1min0_2}, we only need to deal with the nonconvexity of constraints (\ref{miso1min0_2}b)  and (\ref{miso1min0_2}c) .
		Through	introducing slacking variables $\gamma_k$ and $\eta_k$, problem \eqref{miso1min0_2} can be reformulated as:
		\begin{subequations}\label{miso1min2}
			\begin{align}
				\mathop{\min}_{ \boldsymbol p, \boldsymbol a, \boldsymbol r, \boldsymbol w,\boldsymbol  \gamma, \boldsymbol  \eta} \:&
				\sum_{k=1}^K \Big( \kappa y_{1k}(\mathcal D_k, \mathcal S_k) f_k^2
				+  \frac{Z(\mathcal S_k)p_k^2}{r_k+a_k}
				\nonumber \\&+\kappa y_{2k}(\mathcal D_k, \mathcal S_k) g_k^2\Big)
				+\frac{K_0p_0^2}{a_0},
				\tag{\theequation}  \\
				\textrm{s.t.} \:\:\:
				& \frac{y_{1k}(\mathcal D_k, \mathcal S_k)}{f_k}+
				\max\left\{\frac{Z(\mathcal S_k)}{r_k+a_k}, \frac{K_0}{a_0}   \right\}	\nonumber \\&
				+ \frac{y_{2k}(\mathcal D_k, \mathcal S_k)}{g_k}\leq T,\quad \forall k\in \mathcal K,\\
				&a_0+	\sum_{k=1}^Ka_k\leq B \log_2 \left( 1+ \eta_k
				\right), \quad \forall k\in\mathcal K, \\
				&r_k\leq B \log_2 \left( 1+  \gamma_k
				\right), \quad \forall k\in\mathcal K, \\
				& a_0, a_k, p_k\geq 0,\quad \forall k,\\
				& \|\boldsymbol{w}_k\|\leq1,\quad \forall k \in \mathcal K\cup\{0\},\\
				&\frac{p_k^2|\boldsymbol h_k^H \boldsymbol w_k|^2}{ \sum_{j=1,j\neq k}^K p_j^2 |\boldsymbol h_k^H \boldsymbol w_j|^2+\sigma^2}\geq\gamma_k,\quad\forall k \in \mathcal K,\\
				&\frac{p_0^2|\boldsymbol h_k^H \boldsymbol w_0|^2}{ \sum_{j=1}^Kp_j^2 |\boldsymbol h_k^H \boldsymbol w_j|^2+\sigma^2}\geq \eta_k,\quad\forall k \in \mathcal K,
			\end{align}
		\end{subequations}
		where $\boldsymbol \gamma=[\gamma_1,\cdots,\gamma_K]^T$ and $\boldsymbol \eta=[\eta_1,\cdots,\eta_K]^T$.
		In problem \eqref{miso1min2}, the objective function is transformed into convex.
		Because of nonconex constraints (\ref{miso1min2}f) and (\ref{miso1min2}g), problem \eqref{miso1min2} is nonconvex. In the following, we utilize the SCA method to handle these two nonconvex constraints. 
		
		{\color{myc1}For constraint (\ref{miso1min2}f),  it can be equivalent to 
		\begin{equation}\label{miso1min2eq1}
			p_k^2|\boldsymbol h_k^H \boldsymbol w_k|^2\geq \gamma_k \alpha_k,
		\end{equation}
		\begin{equation}\label{miso1min2eq2}
			\sum_{j=1,j\neq k}^Kp_j^2|\boldsymbol h_k^H \boldsymbol w_j|^2+\sigma^2 \leq \alpha_k,
		\end{equation}
	where $\alpha_k$ is a nonnegative slack variable. 
	In \eqref{miso1min2eq1}, we can always choose the term  $\boldsymbol h_k^H \boldsymbol w_k$ as a real value through changing the phase of  beamforming $\boldsymbol w_k$.
Thus,  constraint \eqref{miso1min2eq1} can be rewritten as
	 \begin{equation}\label{miso1min2eq1_2}
		\mathcal R(\boldsymbol h_k^H \boldsymbol w_k)\geq \frac{\sqrt{\gamma_k \alpha_k}}{p_k},
\end{equation}
where the left hand side is convex now. 
Through using the first-order Taylor series to replace the right hand side of \eqref{miso1min2eq1_2}, constraint \eqref{miso1min2eq1_2} can be approximated by 
	 \begin{align}\label{miso1min2eq1_22}
&	\mathcal R(\boldsymbol h_k^H \boldsymbol w_k)\geq \frac{\sqrt{\gamma_k^{(n)} \alpha_k^{(n)}}}{p_k^{(n)}}+
	\frac{\sqrt{\gamma_k^{(n)}  }}{2p_k^{(n)} \sqrt{\alpha_k^{(n)}}}(\alpha_k-\alpha_k^{(n)})	\nonumber \\&
	+	\frac{\sqrt{\alpha_k^{(n)}  }}{2p_k^{(n)} \sqrt{\gamma_k^{(n)}}}(\gamma_k-\gamma_k^{(n)})
	-	\frac{\sqrt{\gamma_k^{(n)}\alpha_k^{(n)}  }}{(p_k^{(n)})^2 }(p_k-p_k^{(n)}),
\end{align}
where the superscript $(n)$ means the value of the variable in the $n$-th iteration. 
 Moreover, \eqref{miso1min2eq2} can be reformulated as
\begin{align}\label{miso1min2eq2_2}
&\sum_{j=1,j\neq k}^K\frac 1 4 (({p_j^2}+|\boldsymbol h_k^H \boldsymbol w_j|^2)^2-({p_j^2}-|\boldsymbol h_k^H \boldsymbol w_j|^2)^2)
	\nonumber \\&=\sum_{j=1,j\neq k}^K{p_j^2}|\boldsymbol h_k^H \boldsymbol w_j|^2+ {\sigma^2}  \leq  {\alpha_k}.
\end{align}
Through replacing the left hand side of \eqref{miso1min2eq2_2} with its first-order Taylor approximation, we can obtain 
	\begin{align}\label{miso1min2eq2_22}
	&\sum_{j=1,j\neq k}^K \frac 1 4
	\bigg[ ((p_j^{(n)})^2+	|\boldsymbol h_k^H \boldsymbol w_j^{(n)}|^2)^2
		+4((p_j^{(n)})^2	\nonumber \\
		&+	|\boldsymbol h_k^H \boldsymbol w_j^{(n)}|^2)p_j^{(n)}(p_j-p_j^{(n)}) 	-(p_j^2- |\boldsymbol h_k^H \boldsymbol w_j|^2)^2
	\nonumber \\
			&+4((p_j^{(n)})^2+	|\boldsymbol h_k^H \boldsymbol w_j^{(n)}|^2)(\mathcal R(\boldsymbol h_k^H \boldsymbol w_j^{(n)}\boldsymbol h_k^H \boldsymbol w_j)-
			|\boldsymbol h_k^H \boldsymbol w_j^{(n)}|^2)
	\bigg]	\nonumber \\
	&
	+{\sigma^2}  \leq  {\alpha_k}.
\end{align}
		
Similarly, we  can introduce slack variable $\beta_k$ and constraint  (\ref{miso1min2}g) can be rewritten as:
		\begin{align}\label{miso1min2eq1_5_1}
			&\frac 1 4((p_0^2+\	|\boldsymbol h_k^H \boldsymbol w_0|^2)^2-(p_0^2- 	|\boldsymbol h_k^H \boldsymbol w_0|^2)^2)	\nonumber \\
			=&p_0^2	|\boldsymbol h_k^H \boldsymbol w_0|^2\geq \beta_k \eta_k=\frac 1 4((\beta_k+\eta_k)^2-(\beta_k- \eta_k)^2),
		\end{align}
		\begin{equation}\label{miso1min2eq1_5_2}
			\sum_{j=1}^Kp_j^2|\boldsymbol h_k^H \boldsymbol w_j|^2+\sigma^2 \leq \beta_k.
		\end{equation}
	Note that we cannot make  $\boldsymbol h_k^H \boldsymbol w_0$ as real values for all $k$ through changing the phase of $\boldsymbol w_0$. To handle the nonconvexity of \eqref{miso1min2eq1_5_1}, we use   first-order Taylor approximation on both sides of \eqref{miso1min2eq1_5_1}, which is different from the method in \cite{yang2021optimization}.
	Considering the  first-order Taylor approximation on both sides, \eqref{miso1min2eq1_5_1} can be transformed to 
			\begin{align}\label{miso1min2eq1_5_11}
		&((p_0^{(n)})^2+	|\boldsymbol h_k^H \boldsymbol w_0^{(n)}|^2)^2
		+4((p_0^{(n)})^2	\nonumber \\
		+&	|\boldsymbol h_k^H \boldsymbol w_0^{(n)}|^2)p_0^{(n)}(p_0-p_0^{(n)}) 	-(p_0^2- |\boldsymbol h_k^H \boldsymbol w_0|^2)^2
			\nonumber \\
			+&4((p_0^{(n)})^2+	|\boldsymbol h_k^H \boldsymbol w_0^{(n)}|^2)(\mathcal R(\boldsymbol h_k^H \boldsymbol w_0^{(n)}\boldsymbol h_k^H \boldsymbol w_0)-
			|\boldsymbol h_k^H \boldsymbol w_0^{(n)}|^2)
			\nonumber \\
		\geq &(\beta_k+\eta_k)^2-(\beta_k^{(n)}- \eta_k^{(n)})(\beta_k- \eta_k)
	 +(\beta_k^{(n)}- \eta_k^{(n)})^2,
	\end{align}
	For constraint \eqref{miso1min2eq1_5_2}, we can use the similar method to handle the nonconvexity of \eqref{miso1min2eq2}. Thus, \eqref{miso1min2eq1_5_2} can be approximated by 
	\begin{align}\label{miso1min2eq1_5_21}
	&\sum_{j=1}^K \frac 1 4
	\bigg[ ((p_j^{(n)})^2+	|\boldsymbol h_k^H \boldsymbol w_j^{(n)}|^2)^2
		+4((p_j^{(n)})^2	\nonumber \\
		+&	|\boldsymbol h_k^H \boldsymbol w_j^{(n)}|^2)p_j^{(n)}(p_j-p_j^{(n)}) 	-(p_j^2- |\boldsymbol h_k^H \boldsymbol w_j|^2)^2
	\nonumber \\
			+&4((p_j^{(n)})^2+	|\boldsymbol h_k^H \boldsymbol w_j^{(n)}|^2)(\mathcal R(\boldsymbol h_k^H \boldsymbol w_j^{(n)}\boldsymbol h_k^H \boldsymbol w_j)-
			|\boldsymbol h_k^H \boldsymbol w_j^{(n)}|^2)
	\bigg]	\nonumber \\
	&
	+{\sigma^2}  \leq  {\beta_k}.
		\end{align}
		
		With the above approximations, we can approximate the nonconvex constraints (\ref{miso1min2}f) and (\ref{miso1min2}g) with the corresponding convex approximation terms. Thus, the original problem  (\ref{miso1min2}) can be approximated by the following convex one:
				\begin{subequations}\label{miso1min3}
			\begin{align}
				\mathop{\min}_{ \boldsymbol p, \boldsymbol a, \boldsymbol r, \boldsymbol w,\boldsymbol  \gamma, \boldsymbol  \alpha, \boldsymbol  \eta,\boldsymbol  \beta} \:&
				\sum_{k=1}^K \bigg( \kappa y_{1k}(\mathcal D_k, \mathcal S_k) f_k^2
				+  \frac{Z(\mathcal S_k)p_k^2}{r_k+a_k}
			\nonumber \\&	+\kappa y_{2k}(\mathcal D_k, \mathcal S_k) g_k^2\bigg)
				+\frac{K_0p_0^2}{a_0},
				\tag{\theequation}  \\
				\textrm{s.t.} \:\:\:
				& (\ref{miso1min2}a)-(\ref{miso1min2}e), \eqref{miso1min2eq1_22}, \eqref{miso1min2eq2_22}, \eqref{miso1min2eq1_5_11}, \eqref{miso1min2eq1_5_21},\\
				&
				\alpha_k\geq0, \beta_k\geq0,  \quad \forall k \in \mathcal K,
			\end{align}
		\end{subequations}
	where $\boldsymbol \alpha=[\alpha,\cdots,\alpha]^T$ and $\boldsymbol \beta=[\beta_1,\cdots,\beta_K]^T$.
	The convex problem \eqref{miso1min3} can be solved by the existing convex optimization toolbox. }
		
		\subsection{Algorithm Analysis}
		
		{\color{myc1}The overall joint communication and computation resource allocation for SWC with RSMA is presented in Algorithm 1. 
		According to Algorithm 1, the complexity of solving problem \eqref{sys1min1} lies in solving three subproblems at each iteration. 
		For the semantic information extraction subproblem, the optimal solution is calculated by \eqref{alg2min1_2Th1eq1} in Theorem 1 with complexity $\mathcal O(K \log_2(1/\epsilon_1))$, where $\mathcal O(\log_2(1/\epsilon_1))$ is the complexity of solving \eqref{alg2min1_2Th1eq1cons1} and \eqref{alg2min1_2Th1eq1cons2} with the bisection method of accuracy $\epsilon_1$. 
		For the computation capacity subproblem, the complexity is $\mathcal O(N_1K)$, where $N_1$  denotes the number of iterations of using the dual method for solving the  computation capacity subproblem.
		For the joint power control, rate allocation, and beamforming design subproblem, the complexity lies in solving the approximated convex problem  \eqref{miso1min3}. 
		The complexity of obtaining the solution of  problem  \eqref{miso1min3} is $\mathcal O(M_1^2M_2)$ \cite{lobo1998applications}, where $M_1=(N+7)K+N+2$ is the total number of variables and $M_2=13K+1$ is the total number of constraints. The total complexity of solving the joint power control, rate allocation, and beamforming design subproblem is $\mathcal O(N_2 N^2K^3)$, where $N_2$ is the number of iterations for the SCA method. 
		As a result, the total complexity of the proposed Algorithm 1 is $\mathcal O(N_3K\log_2(1/\epsilon_1)+N_1N_3K+N_2N_3 N^2K^3)$, where $N_3$ is the number of outer iterations of Algorithm 1.}

		\begin{algorithm}[t]
			\caption{Joint Communication and Computation Resource Allocation for SWC with RSMA}
			\begin{algorithmic}[1]
				\STATE Initialize $ \mathcal S^{(0)},  \boldsymbol f^{(0)},  \boldsymbol g^{(0)},  \boldsymbol p^{(0)}, \boldsymbol a^{(0)}, \boldsymbol w^{(0)}$. Set iteration number $n=1$.
				\REPEAT
				\STATE With given $\boldsymbol f^{(n-1)},  \boldsymbol g^{(n-1)},  \boldsymbol p^{(n-1)}, \boldsymbol a^{(n-1)}, \boldsymbol w^{(n-1)}$, solve the semantic information extraction subproblem and obtain the solution $\mathcal S^{(n)}$.
				\STATE With given $\mathcal S^{(n)},  \boldsymbol p^{(n-1)}, \boldsymbol a^{(n-1)}, \boldsymbol w^{(n-1)}$, solve the computation capacity subproblem and obtain the solution $\boldsymbol f^{(n)},  \boldsymbol g^{(n)}$.
				\STATE With given $\mathcal S^{(n)},  \boldsymbol f^{(n)},  \boldsymbol g^{(n)}$, solve the joint power control, rate allocation, and beamforming design subproblem, of which  the solution is $\boldsymbol p^{(n)}, \boldsymbol a^{(n)}, \boldsymbol w^{(n)}$.
				\STATE Set $n=n+1$.
				\UNTIL the objective value (\ref{sys1min1}) converges.
			\end{algorithmic}
		\end{algorithm}

		\section{Simulation Results}
		In the simulations, there are $K=5$ users in the considered area. 
		For the pathloss model between each user and the BS, we set $128.1+37.6\log_{10} d$ ($d$ is in km) \cite{8352643}
		and the standard deviation of shadow fading is $4$ dB \cite{yang2021optimization}.
	Furthermore, the total bandwidth of the system is $B=20$ MHz and the power spectral density of the noise power is  $-174$ dBm/Hz.
	Unless specified otherwise, we set maximum transmit power $P^{\max}=30$ dBm, the effective switched capacitance in local computation is $\kappa=10^{-28}$,   maximum local computation capacity $g_1^{\max}=\cdots=g_K^{\max}=2$ GHz.
		For the considered semantic information task, we consider the same parameters as in \cite{ammar2018construction}.
		The main system parameters are summarized in Table~I.
		
		\begin{table}[t]
			\centering
			\caption{Main System  Parameters} \label{tab:complexity}
			\begin{tabular}{ccc}
				\hline
				\hline
				Parameter &   Value \\ \hline
				Bandwidth of the BS $B$ & 20 MHz  \\
				Power spectral density of the noise power  & -174 dBm/Hz \\
				Maximum transmit power $P^{\max}$ & 30 dBm\\
				Effective switched capacitance $\kappa$& $10^{-28}$\\
				Number of users $K$& $5$\\
				\hline
				\hline
			\end{tabular}
		\end{table}
		
		The proposed joint communication and computation resource allocation for SWC with  RSMA is labeled as `RSMA'.
		To compare the results of the proposed scheme, we consider the conventional orthogonal multiple access, frequency division multiple access (FDMA) \cite{seong2006optimal}, which is labeled as `FDMA', the total energy minimization problem for NOMA \cite{yang2019efficient}, which is labeled as `NOMA'.
		{\color{myc1}To better show the performance of multiple antenna scheme, we consider the SDMA system as in \cite{mao2018rate}.}

		\begin{figure}[t]
			\centering
			\includegraphics[width=3.5in]{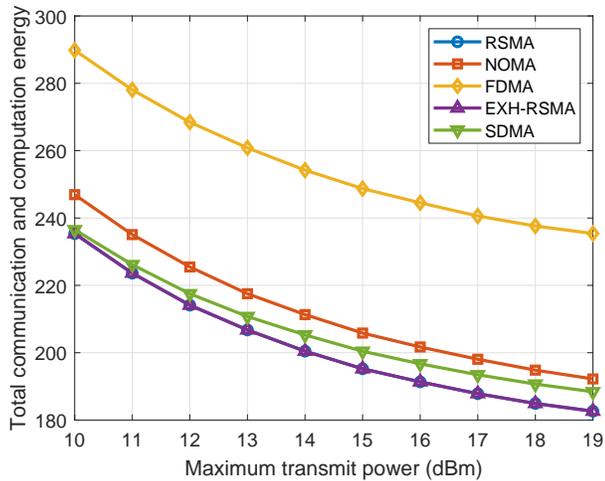}
			\vspace{-0.5em}
			\caption{Total communication and computation energy vs. maximum transmit power.} \label{fig6}
			\vspace{-0.5em}
		\end{figure}
		
		Fig.~\ref{fig6}  illustrates that the total communication and computation energy changes as the maximum transmit power of each user varies.
		According to this figure, the EXH-RSMA scheme stands for the exhaustive search method, which can yield a near globally optimal solution through running the proposed algorithm with  1000 initial solutions.
		It can be shown from this figure that the total energy decreases with the maximum transmit power of the BS. This is due to the fact that large transmit power can lead to low transmit time, which allows more time for computation and yields low total energy consumption. 
		It is observed that the proposed RSMA outperforms FDMA, NOMA, since RSMA can achieve higher spectral efficiency than FDMA and NOMA.
		{\color{myc1}Compared to SDMA, RSMA can still achieve better energy consumption, in particular the maximum transmit power is high.  The reason is that 
		SDMA is more likely to serve the users with higher channel gains, while the users with poor channel gains tend to have long transmit time and high computation power is needed for task computation, thus leading to higher total energy consumption than RSMA.}  It can be also found that the proposed RSMA achieves near performance as the EXH-RSMA, which indicates the effectiveness of the proposed scheme.

		\begin{figure}[t]
			\centering
			\includegraphics[width=3.5in]{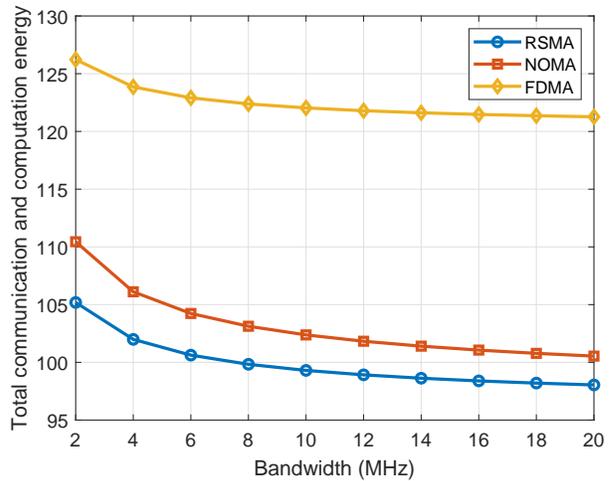}
			\vspace{-0.5em}
			\caption{Total communication and computation energy  vs. bandwidth of the system.} \label{fig7}
			\vspace{-0.5em}
		\end{figure}

		Fig. \ref{fig7} shows the total energy  versus bandwidth of the system.
		Based on this figure, the total communication and computation energy decreases as the bandwidth of the system increases for all schemes.
		This is because high bandwidth decreases the transmit time between users and the BS, which allows long  computation time and consequently reduces the local computation energy consumption.
		
		\begin{figure}[t]
			\centering
			\includegraphics[width=3.5in]{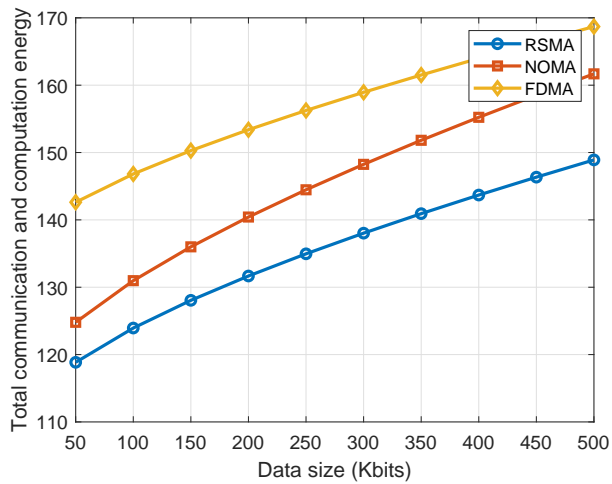}
			\vspace{-0.5em}
			\caption{Total communication and computation energy vs. transmit data size of each user.} \label{fig8}
			\vspace{-0.5em}
		\end{figure}

		 Fig. \ref{fig8} illustrates the trend of total communication and computation energy with the transmit data size of each user. 
		It is observed that the total energy increases as the data size for all schemes.
		This is due to the fact that more information needs to be transmitted, thus increasing the transmit and computation power. It can be found that the growing speed of total energy od the proposed RSMA is slower than that of NOMA and FDMA, which shows the robustness of the RSMA.

		\begin{figure}[t]
			\centering
			\includegraphics[width=3.5in]{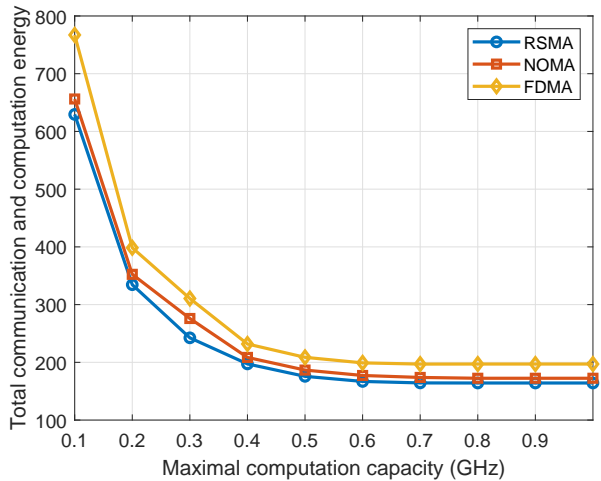}
			\vspace{-0.5em}
			\caption{Total communication and computation energy vs. maximum computation capacity of each user.} \label{fig9}
			\vspace{-0.5em}
		\end{figure}
		
		To show how the computation capacity affects the system performance, Fig.~\ref{fig9} presents the total communication and computation energy versus the maximum computation capacity of each user. According to this figure, the total energy first decreases rapidly and then the total energy tends to approach a fixed value.  The reason lies in that for small computation capacity region, the increase of maximum computation capacity can greatly decrease the computation time and more time can be used for transmission, thus reducing the transmit power and total energy. For high computation capacity region, each user has chosen its optimal computation capacity and the increase of maximum computation capacity does not affect the computation capacity allocation result, thus leading to stable energy consumption.

		\section{Conclusions}
		In this paper, the problem of wireless resource allocation and semantic information extraction for energy efficient semantic communications over wireless networks with rate splitting is investigated. In the considered model, the BS first extracts the semantic information from its large-scale data, and then transmits the small-sized semantic information to each user which recovers the original data based on the local common knowledge. In the downlink transmission, the rate splitting scheme is adopted, while the private small-sized semantic information is transmitted through private message and the common knowledge is transmitted through common message. Due to limited wireless resource, both computational energy and transmission energy need to be considered. This joint computation and communication problem is considered as an optimization problem whose goal is to minimize the total energy consumption of the network under both task completion and semantic accuracy constraints. An iterative algorithm is presented to solve this problem, where at each step, the optimal solutions for semantic information extraction ratio  and computation frequency are derived. Numerical results show the effectiveness of the proposed algorithm.

		\bibliographystyle{IEEEtran}
		\bibliography{IEEEabrv,MMM}

	\end{document}